\documentclass[12pt,a4paper]{amsart}
\usepackage{amsfonts}
\usepackage{amsthm}
\usepackage{amssymb, amsmath}
\usepackage{amscd}
\usepackage[utf8]{inputenc}
\usepackage{bbold}
\usepackage{t1enc}
\usepackage[mathscr]{eucal}
\usepackage{indentfirst}
\usepackage{graphicx}
\usepackage{graphics}
\usepackage{pict2e}

\numberwithin{equation}{section}

\theoremstyle{plain}
\newtheorem{Th}{Theorem}[section]

\newtheorem{Pro}[Th]{Proposition}

 \theoremstyle{definition}
\newtheorem{Def}[Th]{Definition}

\newtheorem{Rem}[Th]{Remark}
\newtheorem{?}[Th]{Problem}

\DeclareMathOperator{\signdim}{sign.dim}
\DeclareMathOperator{\supp}{supp}

\newcommand{\Prob}{\mathbb{P}}
\newcommand{\R}{\mathbb{R}}

\newcommand{\F}{\mathbb{F}}

\begin{document}

\title{On the signaling dimension of the cross-polytope}

 \author[T.\ Bencze, P.\ E.\ Frenkel]{Tam\'as Bencze and P\'eter E.\ Frenkel}

\address{E\"{o}tv\"{o}s Lor\'{a}nd University,
  P\'{a}zm\'{a}ny P\'{e}ter s\'{e}t\'{a}ny 1/C, Budapest, 1117 Hungary  and R\'enyi Institute,  Budapest, Re\'altanoda u.\ 13-15, 1053 Hungary
  }
\email{ benczetamas11@gmail.com, peter.e.frenkel@gmail.com}

\begin{abstract}In the general probabilistic model of information theory, the signaling dimension of  a given communication protocol (or, equivalently, of  its state space) is the  smallest possible complexity of a  classical communication protocol capable of perfectly simulating the given one in terms of input-output correlations. 

The signaling dimensions of three-, four-, and five-dimensional cross-polytopes have been known to be 3.  We give logarithmic lower and upper bounds for the signaling dimensions of higher dimensional cross-polytopes, pinning down the value up to an additive gap of 2. The lower bound shows that the signaling dimension goes to infinity, in contrast to the case of balls and hypercubes.
\end{abstract}

\maketitle
\section{Preliminaries}
We recall the  notions and basic facts related to the general probabilistic model of information theory and to  signaling dimension, cf.\ \cite{A, ATB, F, FW}.

In this Introduction, the letters $K$ and $L$ will always denote convex bodies (i.e., convex, compact sets with non-empty interior) in finite-dimensional real affine spaces. The \emph{affine dimension} of a convex body is the dimension of the ambient affine space. Examples of convex bodies of affine dimension $d$ are the hypercube $[-1,1]^d$ and the cross-polytope $$\diamond_d=\{(x_1, \dots, x_d)\in\R^d:|x_1|+\dots+|x_d|\le 1\}.$$
\begin{Def}
An \emph{effect} on $K\subset\R^d$ is an affine function $\mathbb{R}^d\rightarrow\mathbb{R}$ that is nonnegative on $K$. The \emph{effect cone} is the set of effects.
\end{Def}
\begin{Def}
A \emph{measurement} on $K$ is a finite collection of effects on $K$ summing  to the constant 1 function.
A measurement is \emph{primitive} if it consists of linearly independent \emph{extremal} effects (i.e., effects  belonging to extremal rays of the effect cone).
\end{Def}
\begin{Rem}\label{extr}
The  $k$-part measurements on $K$ form a convex compact set. Every measurement can be obtained from primitive measurements in finitely many steps by taking convex combinations of measurements, replacing two effects with their sum, and inserting zero effects.  This follows from~\cite[Theorem 2]{A} or, alternatively,  from the closely related~\cite[Proposition 1.1]{F}.
\end{Rem}
\begin{Def}
For (extremal) points $p_1$, $p_2$, \dots, $p_l$  of $K$ and a (primitive) measurement $(e_1,e_2, \dots,e_k)$ on $K$, we call the matrix with $(i,j)$-entry $e_i(p_j)$ a \emph{(primitive) transition matrix} of $K$.
\end{Def}

\begin{Rem}\label{transextr} A transition matrix is always \emph{stochastic}, i.e., all entries are nonnegative and each column sums to 1.
The  set of $k\times l$ transition matrices of $K$ is, in general, \emph{not} convex. 

Every transition matrix can be obtained from primitive transition matrices  in finitely many steps by taking convex combinations of transition matrices, replacing two rows with their sum,  inserting zero rows,  inserting convex combinations of existing columns, and deleting columns. This follows from Remark~\ref{extr}.
\end{Rem}

Convex combinations of transition matrices arise naturally in information theory because they correspond to the sender and the receiver having access to shared randomness.

\begin{Def}
$K$ can be simulated by $L$ if every transition matrix of $K$ can be written as a convex combination of transition matrices of $L$.
\end{Def}
\begin{Rem}\label{primsuff} (a) If every \emph{primitive} transition matrix of $K$ can be written as a convex combination of transition matrices of $L$, then $K$ can be simulated by $L$.  This follows from Remark~\ref{transextr}.

 (b)   Simulability is reflexive and transitive. 
    
(c)    If $K$ is an affine image of $L$, then $K$ can be simulated by $L$.
\end{Rem}
\begin{Def}\label{signdim}~\cite{A}
The \emph{signaling dimension} of $K$, sometimes denoted by $\signdim K$, is the smallest $r$ such that $K$ can be simulated by a simplex with $r$ vertices.
\end{Def}
Note that the simplex with $r$ vertices is the space of mixed states corresponding to classical communication with $r$ pure states.
Definition~\ref{signdim} was motivated by the main  result of~\cite{FW}: the signaling dimension of the set of $r$-square density matrices (i.e.\ quantum states) is $r$.

We shall make use of the following description of classical simulability.

\begin{Rem}\label{simplex}\cite{FW} For a  matrix $M$, the following are equivalent.

(a) $M$ is a convex combination of  transition matrices of the simplex with $r$ vertices.

(b) $M$  can be written as a convex combination of stochastic matrices with at most $r$ nonzero rows.
\end{Rem}

The following proposition is almost certainly known, but we include a proof for the convenience of the reader. 

\begin{Pro}\label{extreff}
Assume that  the convex body $K$ is a polytope. 

(a) For each facet $S$ of $K$, there is a nonzero effect on $K$ vanishing on $S$, and it is unique up to a constant factor. 

(b) Such effects are extremal, 
and these are the only extremal effects on $K$.
\end{Pro}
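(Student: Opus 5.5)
Throughout, I use the standard facts that a polytope with nonempty interior in $\R^d$ is a bounded intersection of finitely many closed half-spaces, and that its facets are the $(d-1)$-dimensional faces. I write $K=\{x: f_S(x)\ge 0 \text{ for all facets } S\}$, where $f_S$ is an affine function vanishing on $S$ and positive on the interior of $K$.

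\emph{Part (a).} For existence I take $e=f_S$: it is affine, nonnegative on $K$, vanishes on $S$, and is nonzero because $K$ has interior points. For uniqueness, let $e$ be any effect vanishing on $S$. Since $S$ is $(d-1)$-dimensional, its affine hull is a hyperplane $H$, and $e$ vanishes on $H$. The affine functions vanishing on a fixed hyperplane form a one-dimensional space, so $e=cf_S$. Because $e\ge 0$ at an interior point, where $f_S>0$, we get $c\ge 0$, and $c>0$ when $e\ne 0$.

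\emph{Part (b), extremality.} Suppose $f_S=g+h$ with $g,h$ effects. On $S$ we have $g,h\ge 0$ and $g+h=0$, so $g=h=0$ on $S$. By (a), $g$ and $h$ are nonnegative multiples of $f_S$. Hence $f_S$ spans an extremal ray of the effect cone.

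\emph{Part (b), these are the only extremal effects.} The key step is to show that every effect is a nonnegative combination of the $f_S$ and the constant function $1$. This is the affine Farkas lemma (Minkowski's theorem for the dual cone): if an affine function $e$ is nonnegative on the nonempty polyhedron $\{f_S\ge 0 \ \forall S\}$, then $e=\lambda_0+\sum_S\lambda_S f_S$ with all $\lambda\ge 0$. I would invoke it directly, or derive it by applying linear Farkas to the homogenized cone $\{(x,t):t\ge0,\ \tilde f_S(x,t)\ge 0\}$, noting that boundedness of $K$ makes this cone's recession part trivial.

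Next I remove the constant term. Since $K$ is bounded, $\sum_S f_S$ is an affine function positive on $K$, and I expect it in fact to be a positive constant plus a combination of the $f_S$. More directly: the linear parts of the $f_S$ positively span the dual space, since otherwise $K$ would be unbounded. Hence $\sum_S \mu_S f_S$ has zero linear part for suitable $\mu_S>0$, so it equals a constant, and that constant is positive because the sum is positive on the interior. Therefore $1=\sum\mu'_S f_S$ with $\mu'_S>0$, and every effect lies in the cone generated by the $f_S$.

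A finitely generated pointed cone has its extremal rays among the generator rays. So any extremal effect is a positive multiple of some $f_S$, which by (a) is the effect vanishing on the facet $S$.

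The main obstacle is this last conversion. It requires the affine Farkas lemma together with the observation that boundedness lets the constant $1$ be expressed positively through the facet functionals. I would present it via the positive spanning argument above.
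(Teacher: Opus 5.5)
Your proof is correct. Part (a) and the extremality half of (b) agree in substance with the paper: an affine function vanishing on the affine hull of $S$ is determined up to a scalar, and both summands in a decomposition of a facet effect must vanish on that facet.

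For the claim that there are no other extremal effects, you take a genuinely different route.

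\emph{The paper's argument is local.} The zero set of an effect $e$ is a face of $K$. If it is not a whole facet, it lies in some facet whose effect $g$ has a strictly larger zero set. Then $e-\lambda g\ge 0$ on $K$ for small $\lambda>0$; this is checked at the vertices, using that $g=0$ wherever $e=0$. So $e=\lambda g+(e-\lambda g)$ is a nontrivial splitting.

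\emph{Your argument is global.} The affine Farkas lemma puts every effect in the cone spanned by $1$ and the $f_S$. Boundedness then lets you absorb the constant. Here you need not insist on $\mu_S>0$: any nontrivial nonnegative relation among the linear parts of the $f_S$ suffices, since each $f_S$ is positive at an interior point. Hence the effect cone is exactly the cone generated by the facet effects, and extremal rays of a finitely generated cone lie among its generator rays.

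\emph{What each buys.} The paper's argument is more elementary: it needs only the vertex description and the fact that every proper face lies in a facet. Yours relies on LP duality, but it yields the stronger statement that every effect is a nonnegative combination of facet effects. That is the kind of decomposition underlying Remark~\ref{extr}.

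One small correction to your aside: the affine Farkas lemma holds for any nonempty polyhedron. So boundedness is needed only in the step that removes the constant, not in the homogenization.
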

\begin{proof}
(a) Let $P\in K$ be a point outside $S$. For any fixed positive value $c$, there is a unique affine function $e$ that vanishes on $S$ and has $e(P)=c$. This $e$ is nonnegative on $K$. If we fix another value $\lambda c$, then the effect we get is $\lambda e$. 

(b) First observe that the vanishing set of an effect $e$ is a face of $K$, so it is either contained in a facet, or  $e$ is constant $0$. 

If $e$ vanishes on a facet, then we can only write it as a positive combination of effects that also vanish on that facet, meaning only its scalar multiples, so $e$ is extremal. 

Otherwise, we can choose an effect  $g$ with a strictly larger vanishing set,  so, for small enough $\lambda>0$, we have  $e\geq\lambda g$ on the whole $K$, hence $e$ is not extremal: it is the sum of the effects $\lambda g$ and $e-\lambda g$, where $\lambda g$ is not a scalar multiple of $e$.
\end{proof}

Computing, or even efficiently estimating  the signaling dimension of a given 
state space is usually difficult. In~\cite{F}, it was shown that balls (or, equivalently, ellipsoids) of arbitrary affine dimension have signaling dimension 2. 
Hypercubes are easy:

\begin{Pro}
    For any $d\ge 1$, the hypercube $[-1,1]^d$ has signaling dimension $2$.
\end{Pro}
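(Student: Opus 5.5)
Two inequalities have to be shown: that the signaling dimension is at least $2$, and that the hypercube can be simulated by the simplex with $2$ vertices.

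\emph{Lower bound.} A simplex with one vertex is a single point. All its transition matrices therefore have identical columns, and so do their convex combinations. The hypercube, however, has the measurement $e_1=(1+x_1)/2$, $e_2=(1-x_1)/2$. Evaluated at the vertices $(1,\dots)$ and $(-1,\dots)$, it gives the identity matrix $\begin{pmatrix}1&0\\0&1\end{pmatrix}$, whose columns differ. Hence the signaling dimension is at least $2$.

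\emph{Upper bound.} By Remark~\ref{primsuff}(a), it is enough to treat primitive transition matrices. By Remark~\ref{simplex}, each of these must be written as a convex combination of stochastic matrices with at most $2$ nonzero rows.

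\emph{Primitive measurements.} The facets of $[-1,1]^d$ are $\{x_i=\pm1\}$. By Proposition~\ref{extreff}, the extremal effects are therefore the positive multiples of $1\mp x_i$. A primitive measurement consists of linearly independent effects $c_j(1+\sigma_j x_{i_j})$ summing to the constant $1$. Comparing the coefficient of each $x_i$, the effects using a given coordinate $i$ must have cancelling linear parts. By linear independence, at most one effect can use $1+x_i$ and at most one can use $1-x_i$. So each coordinate that occurs contributes exactly the pair $c(1+x_i)$, $c(1-x_i)$ with the same $c$. The constant terms then give $\sum 2c=1$.

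\emph{Reduction to two outcomes.} A primitive measurement is thus a convex combination, with weights $2c$, of the two-outcome measurements $\bigl((1+x_i)/2,(1-x_i)/2\bigr)$, each padded with zero rows. Evaluating at the extreme points $p_1,\dots,p_l$, the primitive transition matrix is the same convex combination of matrices with at most two nonzero rows. Each of these matrices is stochastic, since each column is $\bigl((1+x_i(p_j))/2,(1-x_i(p_j))/2\bigr)$. This is exactly condition (b) of Remark~\ref{simplex}, so the signaling dimension is at most $2$.

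The main step is the structural claim that primitive measurements of the cube come in such cancelling pairs; it rests only on the linear-independence bookkeeping above.
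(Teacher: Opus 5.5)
Your proof is correct and follows the paper's route: you use Proposition~\ref{extreff} to show that primitive measurements are built from pairs $c(1+x_i)$, $c(1-x_i)$, then apply Remark~\ref{simplex} with $r=2$, and the lower bound is the trivial one. Your convex-combination step is harmless but unnecessary: linear independence already forbids two different coordinates from occurring, since $(1+x_i)+(1-x_i)=(1+x_k)+(1-x_k)$, so every primitive measurement is exactly $\bigl((1+x_i)/2,(1-x_i)/2\bigr)$, which is the paper's ``two parts'' observation.
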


\begin{proof}
From Proposition~\ref{extreff}, it is easy to see that each primitive measurement has two parts; it consists of extremal effects vanishing on opposite facets. Therefore,  the signaling dimension is $\le 2$.  The reverse inequality is obvious as the hypercube is not just one point.
\end{proof}

\section{Bounding the signaling dimension of the cross-polytope}
The signaling dimensions of the cross-polytopes $\diamond_d$ for $d=1,2,3,4,5$ are $2,2,3,3,3$, respectively. Case $d=1$  is trivial; case $d=2$  was computed in~\cite{A}, case $d=3$  in~\cite{F}, and cases $d=4,5$  in~\cite{AK}, where the case of general  $d$ was posed as an open problem.  

From Remark~\ref{primsuff}(c), we have $\signdim\diamond_d\le\signdim\diamond_{d+1}$. 

By Proposition~\ref{extreff}, we can think of an extremal effect on $\diamond_d$ as a pair consisting of a facet and a number (the value of the effect on the opposite facet). Thus, we can think of a primitive measurement as an appropriate function from the set of facets to $[0,1]$. 

Alternatively, instead of the facets, we can consider the vertices of the dual polytope, which is the hypercube $[-1,1]^d$. In this terminology, a vertex $x\in\{-1,1\}^d$ gives rise to the extremal effect $e_x(v)=(1+x\cdot v)/2$, so that $e_x=1$ on the facet of $\diamond_d$ represented by $x$, and $e_x=0$ on the opposite facet. 

A function $f:\{-1,1\}^d\to [0,1]$ represents a (not necessarily primitive) measurement $\mathcal M_f=(f(x)e_x)_{x\in\supp f}$ if and only if $\sum_xf(x)e_x=1$, i.e, for any facet $H$ of the hypercube we have \begin{equation}\label{balanced}\sum\limits_{x \in H\cap\{-1,1\}^d}f(x)=1\end{equation}
The measurement  $\mathcal M_f$ is primitive if and only if the extremal effects $e_x$  $(x\in\supp f)$  are linearly independent, i.e., the support of $f$ is an affine independent set. 

Since we want to check simulability, we may assume that the states on which we use our measurement  are precisely  the $2d$ vertices of $\diamond_d$; cf.\ Remark
~\ref{primsuff}. In this way, the functions above also correspond to transition matrices: to a function $f$ corresponds the transition matrix $M_f=(f(x)e_x(p))_{x,p}$, where $x$ runs through the elements of $\supp f$ and  $p$ runs through the vertices of $\diamond_d$. 

In the sequel, we shall write  $\binom{\supp f}{\le r}$ for the family of those subsets of $\supp f$ that have cardinality $\le r$.  Furthermore, for any set $S\subseteq \{-1,1\}^d$, we write $f(S)=\sum_{x\in S}f(x)$

\begin{Pro}\label{model} Let  $f:\{-1,1\}^d\to [0,1]$  satisfy the equations~\eqref{balanced}.  Then the following are equivalent.
 
  (a) The transition matrix $M_f$ 
  can be written as a convex combination of transition matrices of the simplex with $r$ vertices. 
  
  (b) There exists a 
  probability distribution $F$ on the family $\binom{\supp f}{\le r}$ 
  such that for any subset $S\subseteq\supp f$ contained in any facet of $[-1,1]^d$, we have \begin{equation}\label{Hall}f(S)\leq \sum_{
  X\cap S\neq\emptyset}
  F(X). \end{equation}
    
\end{Pro}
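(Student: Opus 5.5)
The plan is to reduce the statement to a bipartite supply–demand (Hall-type) problem via Remark~\ref{simplex}. By that remark, (a) holds if and only if $M_f=\sum_k\lambda_k N_k$ with $\lambda_k\ge 0$, $\sum_k\lambda_k=1$, and each $N_k$ a stochastic matrix with at most $r$ nonzero rows. Here the rows of $M_f$ are indexed by $x\in\supp f$.

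First compute the columns of $M_f$. For a vertex $p=s\varepsilon_i$ of $\diamond_d$, where $s=\pm1$ and $\varepsilon_i$ is the $i$-th standard basis vector, we have $e_x(p)=(1+sx_i)/2$. This equals $1$ if $x$ lies in the facet $H_p=\{x_i=s\}$ of the hypercube, and $0$ otherwise. So the column $c_p$ of $M_f$ is $f\cdot 1_{H_p}$ restricted to $\supp f$, and it is a probability vector by \eqref{balanced}. Every facet of $[-1,1]^d$ arises as $H_p$ for exactly one vertex $p$.

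\emph{(a)$\Rightarrow$(b).} Given a decomposition as above, let $X_k\subseteq\supp f$ be a set of cardinality $\le r$ containing all nonzero rows of $N_k$ (pad if needed). Define $F(X)=\sum_{k:X_k=X}\lambda_k$, which is a probability distribution on $\binom{\supp f}{\le r}$. Now take $S\subseteq\supp f\cap H_p$. Then
\[
f(S)=c_p(S)=\sum_k\lambda_k N_k(S,p)\le\sum_k\lambda_k[X_k\cap S\ne\emptyset]=\sum_{X\cap S\neq\emptyset}F(X),
\]
because column $p$ of $N_k$ is a probability vector supported on $X_k$. This is \eqref{Hall}.

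\emph{(b)$\Rightarrow$(a).} This is the substantive direction. Fix $F$ and treat each column $p$ separately, since the columns only share the weights $F(X)$. We must find, for each $X$ with $F(X)>0$ and each $p$, a probability vector $q_{X,p}$ supported on $X$ such that $\sum_X F(X)q_{X,p}=c_p$. Then the matrices $N_X$ with columns $q_{X,p}$ are stochastic with nonzero rows inside $X$, so at most $r$ of them, and $M_f=\sum_X F(X)N_X$. This gives (a) by Remark~\ref{simplex}.

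For fixed $p$, this is a transportation problem on a bipartite graph. The sources are the sets $X$ with supply $F(X)$, the sinks are the elements $x\in\supp f$ with demand $c_p(x)$, and the edges join $X$ to $x$ when $x\in X$. Total supply equals total demand, both being $1$. By the supply–demand theorem (Gale's theorem, or max-flow min-cut, i.e.\ a weighted Hall theorem), an exact flow exists if and only if $c_p(S)\le\sum_{X\cap S\ne\emptyset}F(X)$ for every $S\subseteq\supp f$.

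It remains to reduce this to the sets $S$ allowed in (b). Replacing $S$ by $S\cap H_p$ leaves the left side $c_p(S)=f(S\cap H_p)$ unchanged and does not increase the right side. So the condition is needed only for $S\subseteq\supp f\cap H_p$, which is exactly \eqref{Hall}.

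The main obstacle is the invocation of the flow theorem: one must check carefully that it applies with real capacities and that Hall's condition on sinks suffices when supply equals demand. I would verify this directly by a min-cut argument. A cut is determined by a set $T$ of sources and a set $S$ of sinks, and it has infinite capacity unless every edge from $T$ goes into $S$. Minimizing over such cuts then yields exactly the stated inequality.
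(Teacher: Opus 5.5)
Your proposal is correct and follows essentially the same route as the paper: use Remark~\ref{simplex}, split the problem column by column, apply the supply--demand theorem to the incidence graph between $\supp f$ and the sets $X$, and note that only sets $S$ inside the facet $\{y: e_y(p)=1\}$ need to be checked. The one cosmetic difference is that you prove (a)$\Rightarrow$(b) by a direct computation rather than through the necessity half of the supply--demand theorem. This also makes explicit the step of merging the matrices that share a support set $X$, which the paper leaves implicit.
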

\begin{proof}
By Remark~\ref{simplex}, (a) holds 
if and only if there is a  probability distribution $F$ on the family $\binom{\supp f}{\le r}$, 
and a stochastic matrix $A(X)$ for every  
subset $X\in\binom{\supp f}{\le r}$, which  has only zero entries outside the rows labeled by elements of $X$, such that $$\sum_X F(X)A(X)=M_f.$$ If we fix $F$, we can separate the conditions by column: for each vertex $p$ of $\diamond_d$, the column of $A(X)$ labeled with $p$ is an arbitrary probability distribution on $X$. For a fixed $p$, the existence of probability distributions $A(X)_{\cdot,p}$ with the condition $$\forall x \qquad\sum_X F(X)A(X)_{x,p}=f(x)e_x(p)$$ can be decided using the Supply--Demand Theorem \cite[2.1.5. Corollary]{LoPlu} on the  bipartite  incidence graph between the elements $x$ of $\supp f$ 
and the $r$-element subsets $X$ of $\supp f$, with weights $f(x)e_x(p)$ on the elements and with weights $F(X)$ on the subsets. The theorem says that there exists a probability distribution on the edges with these marginals --- which is equivalent to the existence of the required probability distributions $A(X)_{.,p}$ --- if and only if for each subset $S$ of $\supp f$, the inequality \begin{equation}\label{Hallp}\sum\limits_{x\in S}f(x)e_x(p)\leq \sum_{
X\cap S\neq\emptyset}
F(X) \end{equation} holds. But it is enough to check \eqref{Hallp} for subsets $S$ on the facet $\{y|e_y(p)=1\}$, so we get the inequality~\eqref{Hall}  for this facet. Now taking all the constraints from $p$ running through all vertices of $\diamond_d$, we get (b).
\end{proof}

\subsection{Lower bound}
\begin{Th}\label{lowerbound} 
The signaling dimension of the $d$ dimensional cross-polytope is at least $\lceil\log_2(d+2)\rceil$.
\end{Th}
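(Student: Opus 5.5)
The plan is to exhibit, for $d=2^r-1$, a single primitive measurement on $\diamond_d$ whose transition matrix cannot be simulated by the simplex with $r$ vertices. Together with the monotonicity $\signdim\diamond_d\le\signdim\diamond_{d+1}$ noted above, this gives the theorem. Indeed, the claim $\signdim\diamond_d\ge\lceil\log_2(d+2)\rceil$ says that whenever $2^r<d+2$, i.e.\ $d\ge 2^r-1$, the simplex with $r$ vertices does not simulate $\diamond_d$. By monotonicity it is enough to treat $d=2^r-1$.

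\emph{Construction.} Index the coordinates of $\R^d$ by the nonzero vectors $c\in\F_2^r$, so that $d=2^r-1$. For each $a\in\F_2^r$ let $x_a\in\{-1,1\}^d$ have coordinates $(x_a)_c=(-1)^{c\cdot a}$. These $2^r=d+1$ points are the rows of a Sylvester Hadamard matrix with its all-ones column deleted. Put $f(x_a)=2/2^r$ and $f=0$ elsewhere. For each nonzero $c$ exactly half of the $a$ satisfy $c\cdot a=0$, so every facet $\{x_c=\pm1\}$ of $[-1,1]^d$ contains exactly $2^{r-1}$ support points, and their total weight is $1$. Hence \eqref{balanced} holds. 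The support is affinely independent because the full Hadamard matrix is invertible, so $\mathcal M_f$ is primitive. Primitivity is not strictly needed here, since a single non-simulable transition matrix suffices.

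\emph{Contradiction via Proposition~\ref{model}.} Suppose $M_f$ is a convex combination of transition matrices of the simplex with $r$ vertices. Proposition~\ref{model} then gives a probability distribution $F$ on $\binom{\supp f}{\le r}$ satisfying \eqref{Hall}. Take $S=\supp f\cap H$ for a facet $H$. Then $f(S)=1$, so \eqref{Hall} forces $F$-almost every $X$ to meet $\supp f\cap H$. As there are finitely many facets, some $X$ with $F(X)>0$ meets $\supp f\cap H$ for every facet $H$.

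This is impossible. Any $X=\{x_{a_1},\dots,x_{a_k}\}$ with $k\le r$ has $a_1,\dots,a_k$ lying in an affine subspace of $\F_2^r$ of dimension at most $r-1$, hence in an affine hyperplane $\{a: c\cdot a=\beta\}$ for some nonzero $c$. Then every point of $X$ has $x_c=(-1)^\beta$, so $X$ misses $\supp f\cap\{x_c=-(-1)^\beta\}$, which is the support inside a facet. This contradicts the previous paragraph.

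The step I expected to be the main obstacle was choosing $f$ so that the Hall-type conditions \eqref{Hall} fail. I first considered a weighted sum of the inequalities \eqref{Hall} over small subsets $S$. The linear-algebraic choice above reduces everything to the trivial full-facet sets $S$, each of weight $1$, and to the fact that $r$ points of $\F_2^r$ always lie in an affine hyperplane. The remaining checks are routine: the balance condition \eqref{balanced}, affine independence, and the reduction from general $d$ to $d=2^r-1$ by monotonicity.
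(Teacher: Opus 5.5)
Your proof is correct and follows the paper's own argument: the same construction (your points $x_a$ are exactly the paper's simplex code $W$, the row space of the matrix whose columns are all nonzero vectors of $\F_2^r$, with weight $2^{1-r}$), reduced to $d=2^r-1$ by monotonicity. The contradiction is also the same: any $\le r$ code points share a coordinate, so they miss the full-weight set $S$ on the opposite facet, which violates \eqref{Hall}.
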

\begin{proof}Set $r=\lceil\log_2(d+2)\rceil-1=\lfloor\log_2(d+1)\rfloor$. We aim to prove that $\signdim\diamond_d>r$.  We have $2^r\le d+1$, and, from Remark~\ref{primsuff}(c),  we may assume that this holds with equality.

Recall the construction of the \emph{shortened Hadamard code} or \emph{simplex code} $W<\F_2^d$. It is the row space of the matrix $A\in\F_2^{r\times d}$ whose columns are all the nonzero vectors in $\F_2^r$, in some order.\footnote{Note that $A$ is the parity check matrix of the \emph{Hamming code}, which is therefore the dual of the shortened Hadamard code.} Since $A$ has rank $r$, we have $|W|=2^r$. Since $A$ has no zero column and $W$ is a linear subspace, $W$ has $2^{r-1}$ points on any coordinate hyperplane.

Any $r$ elements of $W$  are the rows of a  matrix $BA$ with $B\in M_r(\F_2)$. The $r$ rows of $B$ are always on an affine  hyperplane in $\F_2^r$ and therefore any $r$ elements of $W$ have a common coordinate, i.e., in  the natural identification $\F_2^d\leftrightarrow \{-1,1\}^d$ they are on a  facet.

   Put weight $2^{1-r}$ on each point of $W$  and consider the natural identification above 
   to get a function $f:\{-1,1\}^d\to \{0, 2^{1-r}\}$ satisfying the equations~\eqref{balanced}.

   By Proposition~\ref{model}, it suffices to prove that no probability distribution $F$ on $\binom{\supp f}{\le r}$ 
   can satisfy the inequalities~\eqref{Hall} for all subsets $S\subseteq\supp f$ belonging to a facet of the hypercube $[-1,1]^d$. And indeed, for any distribution $F$ we have  $F(X)>0$  for some subset $X \in\binom{\supp f}{\le r}$. But $X\cap W$ belongs to a facet $H$, so $S:=W\setminus H$  belongs to the opposite facet  and is disjoint from $X$. Therefore,  \eqref{Hall} would imply $1\le 1-F(X)$, a  contradiction.
\end{proof}

\subsection{Upper bound}  We start with an elementary bound which is only useful for small values of $d$.
\begin{Th}
The signaling dimension of the $d$ dimensional cross-polytope is at most $\lfloor(d+3)/2\rfloor$.
\end{Th}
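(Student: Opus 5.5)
By Remark~\ref{primsuff}(a) it suffices to handle primitive transition matrices, so throughout I take $f:\{-1,1\}^d\to[0,1]$ satisfying \eqref{balanced} with affinely independent support. I write $n=|\supp f|$, so that $n\le d+1$. Set $r=\lfloor (d+3)/2\rfloor$. The goal is then to exhibit a distribution $F$ on $\binom{\supp f}{\le r}$ satisfying \eqref{Hall} and to invoke Proposition~\ref{model}.

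I record two facts before choosing $F$. First, summing \eqref{balanced} over a pair of opposite facets $H,-H$ gives $f(\supp f)=2$. Second, if $S\subseteq H$, then $f(S)\le 1$, and the part of the support lying in $-H$ has weight exactly $1$. If $n\le r$ there is nothing to prove: put $F$ on the single set $\supp f$, and then \eqref{Hall} reads $f(S)\le 1$, which holds. So assume $n>r$. Then $n-r\le d+1-\lfloor(d+3)/2\rfloor=\lceil (d-1)/2\rceil$, and each $X$ in the support of $F$ only needs to omit a set $Y=\supp f\setminus X$ of at most $\lfloor n/2\rfloor$ points. Reformulated, I need a distribution $G$ on sets $Y$ with $|Y|\le n-r$ such that
\[
\sum_{Y\supseteq S}G(Y)\;\le\;1-f(S)\qquad\text{for every } S\subseteq\supp f \text{ contained in a facet } H.
\]
The right-hand side equals $f\bigl((\supp f\cap -H)\bigr)+f\bigl((\supp f\cap H)\setminus S\bigr)$, and in particular it is at least $1$ minus the weight of $S$ inside its facet.

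The construction I would try is a systematic-sampling rule. Lay the support points as consecutive arcs of length $f(x)$ on a circle of circumference $2$, choose $t$ uniformly, and let $Y$ be the set of points whose arcs lie entirely inside a fixed half-window $(t,t+1)$ (of length $1$). The complement $X$ then consists of the points whose arcs meet the closed complementary half. Under this rule a point $x$ is omitted with probability $\max(0,1-f(x))/2$. More generally, a set $S$ is omitted only if all its arcs fit inside a window of length $1$, and the probability of that is at most $1-(\text{span of }S)$, where the span is at least $f(S)$. This gives $P(X\cap S=\emptyset)\le 1-f(S)$, which is exactly \eqref{Hall}.

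The remaining issue is the cardinality $|Y|\ge n-r$, equivalently $|X|\le r$. The half-window count of arcs fully inside is not automatically large, so I expect this to be the main obstacle. My first remedy would be to mix the window rule with a second, complementary rule: use both halves, taking $X$ to be either the set of arcs meeting $[t+1,t+2]$ or the set meeting $[t,t+1]$. Since these two sets overlap in at most $2$ boundary points, one of them has size at most $\lceil n/2\rceil+1\le r$. I would then select that smaller one, and check that the conditioning on which half is smaller does not break the probability estimate; the plan is to exploit the symmetry $t\mapsto t+1$ for this. If that check fails, the fallback is to use the parity difference between odd and even $d$, which gives one extra slot when $d$ is odd, to absorb the two boundary arcs.
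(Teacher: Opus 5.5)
There is a gap: your circle construction can be made to work, but the proposal stops at the step that carries the proof, and as written that step does not close.

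\textbf{What is missing.} Two things fail for the rule you actually need, namely choosing the smaller of the two halves.
\begin{enumerate}
\item The estimate you state for a general $S$ is too weak. You claim one window omits $S$ with probability at most $1-(\text{span of }S)$. Once the choice of half depends on $t$, the natural bound is the sum over the two halves. That gives $2(1-\ell_S)$, which is not bounded by $1-f(S)$ in general.
\item The size claim is false. The inequality $\lceil n/2\rceil+1\le r$ fails when $d$ is even and $n=d+1$; for $d=4$, $n=5$ it gives $4>r=3$.
\end{enumerate}
So neither the cardinality bound nor \eqref{Hall} is established, and the ``fallback'' is left unspecified.

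\textbf{How to repair it within your framework.} Let $U_S$ be the union of the arcs of the points of $S$. Let $\ell_S\ge f(S)$ be the length of the shortest arc of the circle covering $U_S$.
\begin{itemize}
\item For nonempty $S$, at most one gap of the complement of $U_S$ has length exceeding $1$, and it has length $2-\ell_S$. Hence the positions of $t$ with $U_S\subseteq(t,t+1)$ form an arc of length $\max(0,1-\ell_S)$. Since $t$ is uniform on a circle of length $2$, this event has probability $\max(0,1-\ell_S)/2\le(1-f(S))/2$. The factor $\frac12$ you noted for single points holds for every $S$.
\item For any rule that picks one of the two candidate sets, $\{X\cap S=\emptyset\}\subseteq\{U_S\subseteq(t,t+1)\}\cup\{U_S\subseteq(t+1,t+2)\}$. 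A union bound gives $\Prob(X\cap S=\emptyset)\le 1-f(S)$. No conditioning or symmetry argument is needed.
\item For the size: for almost every $t$, an arc meeting both closed halves must contain $t$ or $t+1$, so there are at most two such arcs. The two candidate sets therefore have sizes summing to at most $n+2$. The smaller has at most $\lfloor n/2\rfloor+1\le\lfloor(d+1)/2\rfloor+1=r$ elements.
\end{itemize}

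Two minor slips in the reformulation: you first write $|Y|\le n-r$, which is the wrong direction, and $1-f(S)$ equals $f((\supp f\cap H)\setminus S)$, with no extra term from the opposite facet.

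\textbf{Comparison with the paper.} With these repairs you have a valid randomized proof. It uses only $f(\supp f)=2$, $f(S)\le 1$ for $S$ in a facet, and $n\le d+1$. The paper's proof is deterministic and much shorter:
\begin{itemize}
\item Pick a facet $H$ containing at most $(d+1)/2$ points of $\supp f$, and let $L=H\cap\supp f$.
\item Put $F(L\cup\{x\})=f(x)$ for $x\in\supp f\setminus H$.
\item Then the right-hand side of \eqref{Hall} is $1$ whenever $S$ meets $L$. Otherwise $S$ lies in the opposite facet and both sides equal $f(S)$.
\end{itemize}
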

\begin{proof}
Let $r=\lfloor(d+3)/2\rfloor$. By Remark~\ref{primsuff} and Definition~\ref{signdim}, the signaling dimension of $\diamond_d$ is at most $r$ if every primitive transition matrix of $\diamond_d$ can be written as a convex combination of transition matrices of the simplex with $r$ vertices. To prove this, we will use Proposition~\ref{model}, so we take a primitive transition matrix defined by a weight function $f$. 
Then supp($f$) is affine independent, hence it has at most $d+1$ points, so there exists a facet $H$ of $[-1,1]^d$ on which supp($f$) has at most $({d+1})/{2}$ points. 
Let $L=H\cap \supp f$.

Now we define the function $F$: set  $F(L\cup \{x\})=f(x)$ for each $x\in \supp f\setminus H$, and let $F= 0 $ everywhere else. Note that $$|L\cup \{x\}|\le \left\lfloor\frac{d+1}{2}\right\rfloor+1=r.$$ Now we only have to check that $$\sum_{x\in S}f(x)\leq \sum_{
X\cap S\neq\emptyset}F(X)=\sum\limits_{(L\cup\{x\})\cap S\neq\emptyset}F(L\cup\{x\}) $$ holds for each subset  $S$ of some facet of $[-1,1]^d$. If $S\cap L\neq \emptyset$, then the right hand side is $1$, while the left hand side is at most $1$. Otherwise, both sides are equal to $\sum_{x\in S}F(L\cup\{x\})$.
\end{proof}

We now pass to a more involved result which gives  a much tighter upper bound when $d$ is large; in fact, together with Theorem~\ref{lowerbound}, it pins down the signaling dimension up to an additive gap of 2. 

\begin{Th}\label{thm:logarithmic}
 For integers \(r\geq 6\) and $1\le d\le 2^{r-2}$, one has
$$
 \signdim\diamond_d\leq r.$$
Equivalently,  for all $d\ge 1$,
\[
 \signdim\diamond_d
 \leq \max\{4,\lceil\log_2 d\rceil\}+2.
\]
\end{Th}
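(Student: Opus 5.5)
By Remark~\ref{primsuff}(a), Proposition~\ref{model} and monotonicity in $d$, it suffices to treat $d=2^{r-2}$ and an arbitrary $f$ satisfying~\eqref{balanced} whose support is affinely independent, hence has at most $d+1$ points. For such an $f$ I will construct a distribution $F$ on $\binom{\supp f}{\le r}$ satisfying~\eqref{Hall}. The two forms of the statement are equivalent by taking $r=\max\{6,\lceil\log_2 d\rceil+2\}$. The construction will be probabilistic rather than explicit: the lower bound came from a linear code, so the matching upper bound should come from a random code-like choice.

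\emph{Reformulation and the basic random construction.} Write $\mu=f/2$. Since opposite facets $H,\bar H$ each have $f$-mass $1$, $\mu$ is a probability measure on $\supp f$, and $\mu(H)=1/2$ for every facet $H$. Fix a facet $H$ and $S\subseteq H\cap\supp f$ with $f(S)=t$, and put $S'=(H\cap\supp f)\setminus S$, so $f(S')=1-t$. Condition~\eqref{Hall} says $\Prob(X\cap S=\emptyset)\le 1-t$, and $X\cap S=\emptyset$ is exactly the event $X\cap H\subseteq S'$. First try $X=\{x_1,\dots,x_k\}$ with the $x_i$ i.i.d.\ from $\mu$. Let $N$ be the number of $x_i$ in $H$; then $N\sim\mathrm{Bin}(k,1/2)$, and given $N=n$ the points in $H$ are i.i.d.\ from $f|_H$, so
\[
\Prob(X\cap S=\emptyset)=\mathbb E\,(1-t)^N=(1-t/2)^k .
\]
This is $\le 1-t$ for $t$ bounded away from $1$ once $k$ is moderately large. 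It fails near $t=1$ only because of the event $N=0$, that is, $X$ missing the facet $H$ entirely.

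\emph{Repairing the defect.} The construction must guarantee that $X$ meets every facet that carries support. My first attempt is to condition on the good event $G$ that $x_1,\dots,x_k$ meet all $2d$ facets.
\begin{itemize}
\item By the union bound, $\Prob(G^c)\le 2d\cdot 2^{-k}\le 2^{r-1-k}$, using $d\le 2^{r-2}$. With $k=r$ (or $k=r-1$ plus one extra carefully chosen point) this is at most $1/2$, and smaller for the relevant $k$.
\item Conditioning on $G$ distorts the distribution. Since $\Prob(X\cap S=\emptyset,\ N\ge1)=(1-t/2)^k-2^{-k}$, I would show
\[
\frac{(1-t/2)^k-2^{-k}}{\Prob(G)}\le 1-t\qquad\text{for all } t\in[0,1].
\]
Near $t=1$ this is a derivative comparison: the numerator vanishes at $t=1$ with slope about $k2^{-k}$ in $(1-t)$, which is small. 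For small $t$, the factor $1/\Prob(G)\le 1+O(2^{r-1-k})$ has to be absorbed by the slack between $(1-t/2)^k$ and $1-t$.
\end{itemize}
The hypothesis $r\ge 6$ is presumably exactly what makes this one-variable inequality hold uniformly. If conditioning loses too much, my alternative is a mixture: with probability $1-\varepsilon$ take the random sample, and on $G^c$ (or with probability $\varepsilon$) add a deterministic transversal. One candidate is the set $L\cup\{x\}$ from the proof of the $\lfloor(d+3)/2\rfloor$ bound. Only a transversal of size $\le r$ is usable, so I would build one greedily from the affinely independent support.

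\emph{Main obstacle.} I expect the hard step to be this uniform inequality over $t\in[0,1]$ with exactly $r$ points (equivalently, the sharp constant $2$ in the additive gap). The union bound over the $2d\le 2^{r-1}$ facets leaves only a constant-factor margin, so the values of $k$ and the point budget must be tuned precisely. The regime where $t$ is close to $1$ and $S'$ is small is the delicate one, since there the required miss probability is nearly zero. If the i.i.d.\ sampling is too lossy, I would replace it by sampling without replacement, or by first sampling the facet pattern and then points, to gain the needed constant.
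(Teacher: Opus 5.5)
\textbf{There is a genuine gap in the verification step.} Your construction is exactly the paper's: take $r$ i.i.d.\ samples from $\mu=f/2$ and let $F$ be the law of the sample set conditioned on the event $G$ that every facet is met. Your bound $\Prob(X\cap S=\emptyset,\,G)\le(1-t/2)^r-2^{-r}$ is also the paper's miss estimate. The problem is how you propose to finish. You plan to prove
\[
\frac{(1-t/2)^k-2^{-k}}{\Prob(G)}\le 1-t \qquad (k=r)
\]
for all $t\in[0,1]$, but this inequality is false near $t=0$. The disjoint events ``all samples lie in $H$'' and ``all samples lie in $\bar H$'' are both contained in $G^c$. Hence $\Prob(G)\le 1-2^{1-k}$, and at $t=0$ the left side is at least $(1-2^{-k})/(1-2^{1-k})>1$. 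By continuity it exceeds $1-t$ on some interval $[0,t_0)$; if you only use the union bound $\Prob(G)\ge\frac12$ with $r=6$, it fails for every $t$ below about $0.31$.

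The ``slack'' between $(1-t/2)^k$ and $1-t$ is of order $(k/2-1)t$ and vanishes as $t\to0$. The loss from dividing by $\Prob(G)$ is a fixed constant, so it cannot be absorbed there. The underlying reason is that subtracting $2^{-k}$ removes only one of the $2d$ bad events from the numerator, while the denominator removes all of them. Since $S$ may be a single support point of small weight, this regime cannot be avoided. It is small $t$, not $t$ near $1$, where your plan breaks.

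\textbf{The missing idea.} For small $t$, bound the \emph{hit} probability instead, with a conditioning loss proportional to the hit probability itself. Let $A$ be the event that some sample lies in $S$, and put $h(t)=\Prob(A)=1-(1-t/2)^r$. Fix a pair of opposite facets $H,H'$ with $a=f(S\cap H)$. The probability that $A$ occurs and all samples lie in $H$, or all lie in $H'$, is $2^{-r}\bigl[1-(1-a)^r+1-(1-t+a)^r\bigr]\le 2^{1-r}h(t)$, by concavity of $u\mapsto1-(1-u)^r$. Summing over the $d$ pairs gives $\Prob(A\cap G^c)\le 2d\,2^{-r}h(t)\le\frac12h(t)$, so $\Prob(A\cap G)\ge\frac12 h(t)$.

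For $t\le1/3$, concavity of $h$ and the scalar inequality $(5/6)^r\le\frac13+\frac23 2^{-r}$ give $h(t)\ge 3t\,h(1/3)\ge2(1-2^{-r})t$. The scalar inequality holds for $r\ge6$, and this is where that hypothesis enters. Hence $\Prob(A\cap G)\ge(1-2^{-r})t\ge\Prob(G)\,t$.

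Your miss estimate is then needed only for $t\ge1/3$. There, $g(t)=(1-t/2)^r-2^{-r}$ is convex with $g(1)=0$ and $g(1/3)\le1/3$. The chord from $1/3$ to $1$ gives $g(t)\le(1-t)/2\le\Prob(G)(1-t)$, which makes your heuristic derivative comparison at $t=1$ rigorous.
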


\begin{proof}
Fix a measurement function \(f\) as above, and let
\(\mu(x)=f(x)/2\). Thus, \(\mu\) is a probability distribution
and \(\mu(H)=1/2\) for each facet $H$ of the hypercube.
Let \(Z_1,\ldots,Z_r\) be independent samples with law \(\mu\).
Let \(E\) be the event that the samples meet all facets. 
Put
\[
 b=2^{-r},\qquad \varepsilon=2db,\qquad q=\Prob(E).
\]
Each event that all samples lie on a specified facet 
has probability \(b\). Consequently, the union bound gives
\[
 q\geq 1-\varepsilon\geq\frac12.
\]
Also \(q\leq 1-b\), since \(E\) excludes the event that all
samples lie in any one fixed facet. 

Define \(F\) to be the conditional law, given \(E\), of the set
\[
 X=\{Z_1,\ldots,Z_r\}.
\]
Repeated samples are allowed, so \(|X|\leq r\).
We verify inequality~\eqref{Hall}.

Fix \(S\subseteq\operatorname{supp}f\) contained in a facet, 
and put \(t=f(S)\). Then \(0\leq t\leq 1\).
Let \(A\) be the event that at least one sample belongs to \(S\).
Its unconditional probability is
\[
 h(t):=\Prob(A)=1-(1-t/2)^r.
\]

First we bound the probability that \(A\) occurs but \(E\)
fails. For a fixed facet $H$,  put
\(a=f(S\cap H)\), so that
\(f(S\cap H')=t-a\)  for the opposite facet $H'$.
The probability of \(A\) together with all samples lying in
one of these two facets is
\[
 b\bigl[1-(1-a)^r+1-(1-t+a)^r\bigr].
\]
The function \(u\mapsto 1-(1-u)^r\) is concave on \([0,1]\).
Hence the last expression is at most
\[
 2b\bigl[1-(1-t/2)^r\bigr]=2b\,h(t).
\]
Summing over the \(d\) pairs of opposite facets gives
\begin{equation}\label{eq:hit}
 \Prob(A\cap E)\geq (1-\varepsilon)h(t)\geq \frac12 h(t).
\end{equation}

For a second estimate, choose a facet containing
\(S\). The event that all samples lie on the opposite facet has
probability \(b\), misses \(S\), and is disjoint from \(E\).
It follows that
\begin{equation}\label{eq:miss}
 \Prob(A^c\cap E)\leq (1-t/2)^r-b=:g(t).
\end{equation}

We shall use the elementary inequality
\begin{equation}\label{eq:scalar}
 (5/6)^r\leq \frac13+\frac23\,2^{-r}
             =\frac13+\frac23 b
 \qquad (r\geq 6).
\end{equation}
For \(r=6\), this is
\(15625/46656\leq 11/32\).
For \(r\geq 7\), it follows from
\((5/6)^r\leq (5/6)^7<1/3\).

Suppose first that \(0\leq t\leq 1/3\).
Since \(h\) is concave and \(h(0)=0\), inequality
\eqref{eq:scalar} implies
\[
 h(t)\geq 3t\,h(1/3)
       =3t\bigl[1-(5/6)^r\bigr]
       \geq 2(1-b)t.
\]
Together with \eqref{eq:hit} and \(q\leq 1-b\), this yields
\[
 \Prob(A\cap E)\geq (1-b)t\geq qt.
\]

Now suppose that \(1/3\leq t\leq 1\).
The function \(g\) is convex, \(g(1)=0\), and
\eqref{eq:scalar} gives
\[
 g(1/3)=(5/6)^r-b
       \leq \frac13-\frac13 b\leq\frac13.
\]
By convexity, its graph on \([1/3,1]\) lies below the chord
joining its endpoint values. Therefore
\[
 g(t)\leq \frac{1-t}{1-1/3}\,g(1/3)
       \leq \frac{1-t}{2}
       \leq q(1-t).
\]
Using \eqref{eq:miss}, we obtain
\[
 \Prob(A^c\mid E)\leq 1-t.
\]

In both cases, we have
\[
 \Prob_F(X\cap S\neq\varnothing)
 =\Prob(A\mid E)\geq t=f(S),
\]  verifying~\eqref{Hall}. An application of Proposition~\ref{model}  concludes the proof. 
\end{proof}

\section*{Acknowledgments and AI declaration} The second author's research has been partially supported by NKFIH grants K 146380, KKP 139502, and Excellence  151342. 

Theorem~\ref{thm:logarithmic} and its proof were found by ChatGPT 6.1-Sol after all the other results had been discovered  and proved by the human authors.

\end{document}